\newtheorem{theorem}{Theorem}
\newtheorem{lemma}{Lemma}
\newtheorem{openquestion}{Open Question}
\DeclareMathOperator*{\disc}{\mathrm{disc}}
\title{Discrepancy And Fair Division For Non-Additive Valuations}
\author{Max Dupre la Tour, Kaito Fujii}
\begin{document}

\maketitle

\begin{abstract}
    We extend the notion of combinatorial discrepancy to \emph{non-additive} functions. Our main result is an upper bound of $O(\sqrt{n \log(nk)})$ on the non-additive $k$-color discrepancy when $k$ is a prime power. We demonstrate two applications of this result to problems in fair division. First, we establish a bound for a consensus halving problem, where fairness is measured by the minimum number of items that must be transferred between the two parts to eliminate envy. Second, we improve the upper bound on the total subsidy required to achieve an envy-free allocation when the number of agents is a prime power, obtaining an $O(n \sqrt{n \log n})$ bound. This constitutes the first known subquadratic guarantee in this setting.
\end{abstract}

\section{Introduction}

The celebrated Spencer's theorem in discrepancy theory~\cite{Spencer6sigma} can be naturally interpreted through the lens of a discrete consensus halving problem. Consider a set of items $ M $, and $ n $ agents with \emph{additive valuations} $ v_1, \dots, v_n : 2^{M} \rightarrow \mathbb{R} $ whose marginals lie in $ [-1,1] $. The discrete consensus-halving problem asks for a bipartition $M = S \cup S^c$ such that each agent values the two sides almost equally. Spencer's theorem guarantees a partition with per-agent imbalance $ |v_i(S) - v_i(S^c)| \leq O(\sqrt{n}) $, and this bound is asymptotically tight.

This result extends to more general settings via the theory of multi-color discrepancy~\cite{MulticolorDoerr}, corresponding to the discrete $ 1/k $-consensus splitting problem. In this setting, the objective is to partition the items into $ k \geq 2 $ subsets such that each of the $n$ agent assigns approximately equal value to each subset. One obtains $ 1/k $-consensus splits whose imbalance again scales as $ O(\sqrt{n}) $.

A natural and fundamental question is whether similar discrepancy bounds can be achieved without the assumption of additivity. Despite the intuitive appeal of this generalization, the non-additive setting appears to have received little attention. In this work, we take initial steps, providing the first results for the non-additive case.

While the theoretical question is interesting in its own right, we are further motivated by two additional considerations.

\paragraph{Consensus Halving:}

Given a bipartition of a set of items $ M = S \cup S^c $ and a valuation function $ v $, how can we determine whether the split is “fair” under that valuation?

The approach used in the definition of discrepancy considers the value gap $ |v(S) - v(S^c)| $. However, this measure is only meaningful when the valuation function has bounded marginal values—otherwise, a single outlier can dominate the gap.

For monotone valuations, a more robust fairness criterion asks: how many items must be removed from the more valuable bundle to eliminate envy? This leads to the notion of EF$c$ (envy-freeness up to $c$ items), a well-known concept in fair division generalizing the EF1 notion introduced by Lipton et al.~\cite{IntroducedEF1}. Formally, an allocation is EF$c$ if, for every agent and every other bundle, removing at most $c$ items from the latter makes the agent prefer their own bundle.

This naturally leads to an “EF$c$-based” definition of the discrepancy of a bipartition $(S, S^c)$, defined as the maximum, over a class of valuation functions, of the minimum number of items that must be removed from the richer subset to eliminate envy from the poorer subset. This concept naturally extends to $k$-color partitions and is known as ``consensus $1/k$-division up to $c$ goods,'' as introduced by Manurangsi and Suksompong~\cite{ImprovedBoundsManurangsi}, we briefly summarize some of their results below.

For additive valuations, they  established a connection between classical discrepancy and this new ``EF$c$-based'' discrepancy. In particular, known lower bounds on classical discrepancy directly yield corresponding lower bounds. With a more careful argument, they further showed that the $ O(\sqrt{n}) $ upper bound on discrepancy also implies a matching upper bound for the ``EF$c$-based'' discrepancy. In the multi-color setting, the lower bound was initially loose but was very recently improved, first by Caragiannis et al.~\cite{caragiannis2025newlowerboundmulticolor}, and then again by Manurangsi and Meka \cite{manurangsi2025tightlowerboundmulticolor}, obtaining a tight $ \Omega(\sqrt{n}) $ bound.
All of those results hold only for additive valuations, and it is also natural to explore these questions for non-additive valuations. This is left as an open question in the paper by Manurangsi and Suksompong~\cite{ImprovedBoundsManurangsi}. 

In their conclusion, they write:
\emph{``An interesting direction for future work is to consider agents with arbitrary monotonic utilities,''} 
and 
\emph{``Even in the case of prime numbers $k$, where a consensus $1/k$-division can be guaranteed for non-additive utilities~\cite{ModuloPFilos}, it is unclear whether such a division can be rounded into a discrete allocation with a loss that is bounded only in terms of $n$.''}
Our paper addresses that very challenge.

\paragraph{Envy-freeness with subsidies:}
Another motivation for studying non-additive discrepancy is to derive new results related to envy-freeness with subsidies. An allocation is envy-free when no agent would rather have another agent’s bundle than her own. For indivisible goods such an allocation may fail to exist. This impossibility disappears once a single divisible good—money—is introduced, but then a natural question arises: how large a monetary subsidy suffices to eliminate all envy?

Maskin \cite{Maskin1987}, and later Klijn \cite{subsidy1Klijn}, investigated this problem in the matching setting, where each agent can receive at most one item. Halpern and Shah \cite{HalpernShah} extended the analysis to allocations that allow agents to obtain multiple items, launching a line of follow-up work \cite{OneDollarBrustle,DichotomousBarman,TowardOptKawase}. \

In particular, Brustle et al.~\cite{OneDollarBrustle} establish a $2(n-1)^2$ upper bound on the total subsidy required for non-additive monotone valuations. This bound was recently slightly improved to $(n^2 - n - 1)/2$ by Kawase et al.\cite{TowardOptKawase}. Section~6 of the survey by Liu et al.~\cite{MixedFairDivisionSurvey} reviews this line of work and poses the following open question:

\begin{openquestion}[\cite{MixedFairDivisionSurvey}]\label{openquestion}
    Given $n$ monotone valuations with marginals in $[-1,1]$, is there an envy-free allocation with sub-quadratic total subsidy?
\end{openquestion}

\subsection{Our Contribution}

Our main result is an upper bound on the multi-color discrepancy of non-additive valuation function that hold when the number of colors is a power of a prime. 

\begin{theorem}\label{thm:mainintro}
Let $k = p^\nu$ be a power of a prime, let $M$ be a set of $m$ items and let $ v_1, \dots, v_n : 2^{M} \rightarrow \mathbb{R} $ be $n$ valuation functions with marginals in $[-1,1]$.

There exists a partition of the items in $k$ subsets $M = S_1 \cup \dots \cup S_k$ such that for all $\ell,\ell' \in [k]$ and for all $i \in [n]$: 

\[ |v_i(S_\ell) - v_i(S_{\ell'})|  = O( \sqrt{n \log nk}) \]
\end{theorem}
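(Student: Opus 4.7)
The plan follows a two-phase template familiar from the additive case: first, establish existence of a suitable ``nearly-discrete'' fractional consensus $1/k$-splitting via an algebraic/topological argument that exploits the prime-power structure of $k$; second, round this fractional object to a discrete partition via randomized rounding and concentration of measure.

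In phase one, I aim for a distribution $\pi$ over $k$-partitions of $M$ with at most $N = \tilde{O}(n)$ ``unfixed'' items---the remaining $m - N$ items receiving the same label in every partition in the support of $\pi$---such that $\mathbb{E}_\pi[v_i(S_\ell)] = \mathbb{E}_\pi[v_i(S_{\ell'})]$ for every agent $i$ and every $\ell, \ell' \in [k]$. For $k = p$ prime and non-additive monotone valuations, such a statement follows from the Filos-Ratsikas--Frederiksen--Goldberg--Zhang machinery~\cite{ModuloPFilos}, which combines an equivariant topological argument with a Chevalley-Warning step. I would extend this to prime-power $k = p^\nu$ by leveraging the action of $\mathbb{Z}_p^\nu$ on the space of candidate partitions; this is where the prime-power hypothesis in the theorem is essential.

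In phase two, I independently round each of the $N$ unfixed items according to its prescribed distribution over $[k]$. Because the marginal bound $[-1,1]$ implies that any single rounding decision perturbs each $v_i(S_\ell)$ by at most a constant, the bounded-differences form of Azuma-Hoeffding yields $|v_i(S_\ell) - \mathbb{E}_\pi[v_i(S_\ell)]| = O(\sqrt{N \log(nk)})$ except on an event of probability $1/(nk)^2$. A union bound over the $nk$ pairs $(i,\ell)$, combined with the consensus property of $\pi$, then gives the claimed discrepancy bound via the triangle inequality applied to $v_i(S_\ell) - v_i(S_{\ell'})$.

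The main obstacle is achieving $N = O(n)$ in phase one. A naive reduction that iterates the prime-$k$ result $\nu$ times would leave roughly $\tilde{O}(np^\nu) = \tilde{O}(nk)$ items unfixed and yield only the weaker bound $O(\sqrt{nk \log nk})$. Obtaining $N = O(n)$ appears to require handling the whole group $\mathbb{Z}_p^\nu$ at once---a direct $k$-coloring procedure rather than a coordinate-wise recursion---and this is the step I expect to be the technically delicate part of the proof. Once that existence step is in place, the martingale concentration in phase two is essentially routine.
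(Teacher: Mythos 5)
Your two-phase outline matches the paper's structure (fractional consensus splitting via a topological argument, then independent randomized rounding with a bounded-differences concentration bound and a union bound over the $nk$ agent--color pairs), and phase two is correctly sketched. However, phase one---which you yourself flag as the technically delicate step---is left unresolved, and that step is precisely where the theorem lives. Saying you would ``leverage the action of $\mathbb{Z}_p^\nu$'' and ``handle the whole group at once'' names the ambition but supplies no mechanism; without it you only have the $\tilde{O}(\sqrt{nk})$ bound that the naive recursion gives.

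There is also a quantitative misconception in how you set up phase one. You ask for a fractional splitting with $N = \tilde{O}(n)$ unfixed items \emph{in total}. That is stronger than needed and, for large $k$, likely unachievable: any necklace-type argument produces $n(k-1)$ cuts, hence up to $\Theta(nk)$ fractionally assigned items overall. What the concentration step actually requires is that each \emph{bundle} $S_\ell$ depend on only $O(n)$ random coordinates, because McDiarmid's bound for $v_i(S_\ell)$ scales with the number of coordinates that can change $v_i(S_\ell)$, and items that never land in bundle $\ell$ contribute nothing. The paper closes exactly this gap by invoking the necklace-splitting theorem of Joji\'c et al.~\cite{NecklaceConstraintsJojic}: for prime-power $k$, $[0,1]$ can be partitioned with $n(k-1)$ cuts so that every bundle is balanced under all $n$ set functions \emph{and} consists of fewer than $\lfloor n(k-1)/k\rfloor + 1 \le n$ intervals, hence at most $2n$ fractional items per bundle. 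The key observation enabling the non-additive application is that the proof of that theorem uses only continuity (via a Borsuk--Ulam-type argument), so it applies verbatim to the multilinear extensions $F_i$. You should replace your ``global $N=O(n)$'' target with this per-bundle interval bound, and import Joji\'c et al.\ rather than trying to reconstruct the equivariant topology from scratch.
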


Note that the bound does not depend on the number of items, and that we do not assume monotonicity of the valuations. We apply this theorem in two different fair division settings.

\paragraph{Consensus Halving:} Our original goal was to extend the guarantees of Manurangsi and Skusompong\cite{ImprovedBoundsManurangsi} to non-additive valuations. While EF$c$ remains elusive in this broader setting, we establish analogous bounds for a relaxed fairness measure.

Instead of removing items, consider \emph{transferring} items between the two bundles until the lower-valued one becomes at least as valuable. The minimum number of such transfer defines the \emph{transfer imbalance}, and the worst-case value over a class of valuation functions is the \emph{transfer-discrepancy}. When valuations are not monotone, this measure is particularly natural because removing items from the higher-valued bundle, as in the EF$c$-based definition, might not reduce envy. A similar notion appears in the literature on weighted entitlements (see, e.g., WEF(1,1), as surveyed by Suksompong in~\cite{SUKSOMPONGReviewWeighted}).

As a corollary of Theorem~\ref{thm:mainintro}, we obtain the following:

\begin{restatable}{corollary}{conhal}\label{cor:consensushalving}
For any family of $n$ valuation functions, the two-color transfer-discrepancy is at most $ O\bigl(\sqrt{n \log n}\bigr) $.
\end{restatable}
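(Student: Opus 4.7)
My plan is to apply Theorem~\ref{thm:mainintro} with $k = 2$ (a prime) to the $n$ valuations, obtaining a bipartition $M = S_1 \cup S_2$ with $|v_i(S_1) - v_i(S_2)| \le \varepsilon := O(\sqrt{n \log n})$ for every $i$. The remaining task is to upgrade this value-gap bound at $(S_1, S_2)$ into a transfer-imbalance bound of the same order for each agent.

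Fix an agent $i$ and, without loss of generality, assume $v_i(S_1) \ge v_i(S_2)$; write $\Delta_i := v_i(S_1) - v_i(S_2) \in [0, \varepsilon]$. The goal is to produce a flip set $F_i \subseteq M$ with $|F_i| = O(\varepsilon)$ and $v_i(S_1 \triangle F_i) \le v_i(S_2 \triangle F_i)$. Consider the auxiliary function $u_i(F) := v_i(S_1 \triangle F) - v_i(S_2 \triangle F)$. Because $(S_j \triangle F)^c = S_j^c \triangle F$, this function satisfies the antisymmetry $u_i(F^c) = -u_i(F)$; in particular $u_i(\emptyset) = \Delta_i$ and $u_i(M) = -\Delta_i$. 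And because $v_i$ has marginals in $[-1,1]$, $u_i$ is $2$-Lipschitz in Hamming distance, which already yields the matching lower bound $|F_i| \ge \Delta_i/2$ because a single transfer changes $u_i$ by at most $2$.

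For the upper bound I would construct $F_i$ iteratively, at each step flipping the element whose single-item marginal in $u_i$ is most negative. The principal obstacle is proving that this greedy procedure reaches $u_i \le 0$ within $O(\Delta_i)$ steps. For a pathologically non-additive $v_i$ every single flip may move $u_i$ by $o(1)$ even when $u_i$ is bounded away from zero, so per-step progress cannot be bounded pointwise; it must be derived from a global averaging argument combining the marginal bound with the antisymmetry. A natural route is to bootstrap Theorem~\ref{thm:mainintro} itself: applied to the single auxiliary function $u_i$ it produces a balanced halving of $M$ under $u_i$, and concatenating such halvings on shrinking subproblems yields a short transfer sequence of total length $O(\varepsilon)$. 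Making this bootstrap rigorous — and in particular controlling the $O(\sqrt{\log})$ errors that accumulate across iterations — is, in my view, the principal technical difficulty, and where the argument depends most sensitively on the bounded-marginal structure inherited from the main theorem.
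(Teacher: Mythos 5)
Your proposal applies Theorem~\ref{thm:mainintro} to the \emph{original} valuations $v_1,\dots,v_n$ and then tries to convert the resulting value-gap bound into a transfer-imbalance bound. That conversion step is a genuine gap, not merely a ``principal technical difficulty'': a small value gap $|v_i(S)-v_i(S^c)|$ for a fixed partition does not, in general, bound $T_{v_i}(S,S^c)$. For example, take a function $v$ with marginals in $[-1,1]$ that equals $\delta/2$ on all sets within Hamming distance $d$ of $S$, equals $-\delta/2$ on all sets within Hamming distance $d$ of $S^c$, and is $0$ elsewhere. Then $v(S)-v(S^c)=\delta$ is tiny while every transfer of fewer than $d$ items leaves the inequality unchanged, so $T_v(S,S^c)\ge d$, and $d$ can be taken as large as $\Theta(m)$ independently of $\delta$ and $n$. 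Thus no pointwise, greedy, or averaging argument at the partition produced in your first step can succeed in general. Your bootstrapping idea also does not help: applying Theorem~\ref{thm:mainintro} to the auxiliary $u_i$ produces a \emph{halving} of $M$, i.e.\ a flip set $T$ of size $\Theta(m)$, which says nothing about short transfer sequences.

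The paper avoids this entirely by applying Theorem~\ref{thm:main} to a \emph{different} family of functions. Define $v'_i(S)=\pm T_{v_i}(S,S^c)$, signed so that $v'_i$ is anti-symmetric ($v'_i(S)=-v'_i(S^c)$). The crucial observation, recorded as Lemma~\ref{lem:Lipschitz}, is that $S\mapsto T_{v_i}(S,S^c)$ is $1$-Lipschitz in Hamming distance, so the transformed $v'_i$ has marginals bounded by $1$ (this requires no bound on the marginals of $v_i$ itself). Applying Theorem~\ref{thm:main} with $k=2$ to the family $\{v'_1,\dots,v'_n\}$ yields a partition with $|v'_i(S)-v'_i(S^c)|=2|v'_i(S)|=2\,T_{v_i}(S,S^c)=O(\sqrt{n\log n})$ for all $i$, which is exactly the claim. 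In short: you correctly identified that the statement should reduce to a single use of Theorem~\ref{thm:mainintro} with $k=2$, but you chose the wrong input family, and the resulting conversion step cannot be repaired.
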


Here as well, lower bounds on standard discrepancy directly imply corresponding lower bounds on transfer-discrepancy. In particular, the recent result of Manurangsi and Meka~\cite{manurangsi2025tightlowerboundmulticolor} yields an $\Omega(\sqrt{n})$ lower bound independent of the number of colors.

\paragraph{Envy-freeness with subsidies:} We show how to use partitions with small discrepancy to obtain envy-free allocations with small total subsidy. By applying Theorem~\ref{thm:mainintro}, we are able to answer Open Question~\ref{openquestion}:  
we improve the upper bound and break the quadratic barrier—\emph{even without the monotonicity assumption}—when the number of agents is a power of a prime.

\begin{restatable}{corollary}{subsidy}\label{cor:subsidyintro}
For arbitrary valuations with marginals in the interval $[-1,1]$, if the number of agents $n = p^\nu$ is a power of a prime, there exists an envy-free allocation requiring a total subsidy of $O\left(n\sqrt{n\log n}\right)$.
\end{restatable}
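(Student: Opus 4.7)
The plan is to apply Theorem~\ref{thm:mainintro} with $k = n$, valid because $n = p^\nu$ is a prime power, to obtain a partition $M = S_1 \cup \cdots \cup S_n$ for which every agent values all bundles within $D := O(\sqrt{n \log n})$ of one another (using $\log(nk) = 2\log n$). I then turn this partition into an envy-free allocation with $O(nD)$ total subsidy.

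The assignment rule I would use is a maximum-weight matching $\pi$ between the $n$ agents and the $n$ bundles, setting $B_i := S_{\pi(i)}$. The standard theory of envy-freeable allocations says that non-negative subsidies achieving envy-freeness exist iff the envy digraph $G$ on $[n]$, with edge $i \to j$ of weight $d_{ij} := v_i(B_j) - v_i(B_i)$, has no strictly positive directed cycle, and that a minimum-total-subsidy scheme is obtained by setting $s_i$ equal to the weight of the heaviest directed walk in $G$ starting at $i$ (clipped at $0$). The max-weight matching immediately guarantees the no-positive-cycle property, since rotating bundles along a positively-weighted cycle would strictly increase $\sum_i v_i(B_i)$ and contradict the optimality of $\pi$.

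The crux is to show that every such walk in $G$ has weight at most $D$, saving a factor of $n-1$ over the naive $(n-1)D$ bound. Given a walk $i = i_0 \to i_1 \to \cdots \to i_\ell$ of weight $w$, I would close it with the single edge $i_\ell \to i_0$ to form a closed walk; this decomposes into directed cycles, each of non-positive weight, so $w + d_{i_\ell, i_0} \leq 0$. Rearranging and using $d_{i_\ell, i_0} \geq -D$ from the partition bound yields $w \leq D$. Summing the per-agent subsidies then gives a total of at most $nD = O(n \sqrt{n \log n})$, as desired.

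The main obstacle is precisely this last estimate. Without the uniform envy bound $|d_{ij}| \leq D$ provided by Theorem~\ref{thm:mainintro}, the closing-edge trick fails and the per-agent subsidy is only bounded by the naive $(n-1)D$, yielding a total of $O(n^2 D)$, which is \emph{worse} than the quadratic baseline we aim to beat. The sub-quadratic gain therefore rests on coupling the ``no positive cycle'' structure of the max-weight matching with the two-sided partition bound, so that the single closing edge costs at most $D$ rather than $(n-1)D$.
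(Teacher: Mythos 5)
Your proposal is correct and follows essentially the same route as the paper: apply the discrepancy bound with $k=n$, pass the bundles through a welfare-maximizing (equivalently, max-weight matching) reassignment, invoke the Halpern--Shah characterization to get the no-positive-cycle property, set each subsidy to the heaviest outgoing path weight, and bound that weight by $D$ via the closing-edge argument. The only cosmetic differences are that the paper closes a simple path with a single back edge (no cycle decomposition needed) and notes that at least one $p_i$ vanishes to obtain the slightly tighter $(n-1)D$, which is asymptotically the same as your $nD$.
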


Table \ref{tab:knownbounds} compares our result with the known upper bounds on the total subsidy required for various classes of valuations. 

\begin{table}[H]
\centering
\begin{tabular}{|c|c|c|}
\hline
\textbf{Valuation class} & \textbf{Upper bound on  subsidy} & \textbf{Reference} \\
\hline
Monotone + additive & $m(n-1)$ & \cite{HalpernShah} \\
\hline
Monotone + additive & $n-1$ & \cite{OneDollarBrustle} \\
\hline
Monotone & $2(n-1)^2$ & \cite{OneDollarBrustle} \\
\hline
Monotone + dichotomous & $n-1$ & \cite{DichotomousBarman} \\
\hline
Monotone & $(n^{2}-n-1)/2$ & \cite{TowardOptKawase} \\
\hline
Doubly monotone & $n(n-1)/2$ & \cite{TowardOptKawase} \\
\hline
Arbitrary ($n$ is a prime power) & $O\bigl(n\sqrt{n\log n}\bigr)$ & Corollary~\ref{cor:subsidyintro} \\
\hline
\end{tabular}
\caption{Upper bounds on the monetary subsidy required for envy-freeness}
\caption*{Note: $m$ is the number of items and $n$ the number of agents. Every marginal item value lies in the interval $[-1,1]$. A valuation is monotone if all marginal values are non-negative, dichotomous if each marginal is either $0$ or $1$, and doubly monotone if every item is classified as either a good (all its marginals are non-negative) or a chore (all its marginals are non-positive).\\}
\label{tab:knownbounds}
\end{table}
\vspace{-1cm}

\subsection{Overview of Our Approach}

\paragraph{Proof of Theorem~\ref{thm:mainintro}:}
The standard proof of the two-color discrepancy bound for additive valuation functions (see the Discrepancy chapter of \cite{ProbabilisticMethod}) works as follows.  
When $m>n$, a dimension-reduction lemma from \cite{DimensionReductionBARANY}---using linear algebra---yields a fractional allocation in which every set has identical value under all functions and at most $n$ coordinates are non-integral. A naïve rounding of this solution, coupled with Hoeffding's inequality and union bounds, gives an $ O(\sqrt{n\log n})$ bound. .

With more refined rounding, the union bound can be avoided and the guarantee improves to $O(\sqrt{n})$. Reaching this tighter bound reduces to the special case $m=n$. Several proofs have been found (\cite{Spencer6sigma,DiscrepancyLovett,DiscrepancyRothvoss,DiscrepancyWeightLevy,DiscrepancyPesenti}), but rely—at least implicitly—on the partial-coloring method, which recursively colors a constant fraction of items. This technique crucially exploits additivity, because the total discrepancy is built up from independent contributions at each recursive stage. In the non-additive setting, it is unclear whether the extra $\sqrt{\log n}$ factor imposed by the union bound can be eliminated.

Our approach follows essentially the same steps. We define the value of fractional allocations using the multilinear extension~\cite{MultilinearExtension}, which extends a set function to the continuous domain $[0,1]^m$ by taking the expected value of the function when each item is included independently with a given probability.

Our goal is to obtain a fractional allocation in which all bundles have equal value according to the $n$ multilinear extensions of the valuation functions, and each bundle contains only a few fractional items. Linear-algebraic tools no longer suffice, so we turn instead to continuous necklace-splitting results. Alon’s continuous necklace-splitting theorem \cite{NecklacesAlon} states that a necklace $[0,1]$ endowed with $n$ additive continuous measures $\mu_1,\dots,\mu_n$ can be cut $n(k-1)$ times and partitioned into $k$ bundles of equal value for every measure.

Later work (e.g., \cite{ConsensusHalvingSimmons} for $k=2$ and \cite{ModuloPFilos} for larger prime $k$) observed that additivity is needed only when $k$ is composite. For prime $k$, the proof relies on a Borsuk--Ulam-type theorem and therefore requires only continuity. When $k$ is composite, the necklace is recursively cut, and it is only this recursive step that relies on additivity.

For the case $k = 2$, we can apply results from necklace splitting as follows:  
we model each item as a short interval, align all items along the interval $[0,1]$,  
and invoke the non-additive two-bundle necklace-splitting theorem of~\cite{ConsensusHalvingSimmons}.  
The resulting allocation requires at most $n$ cuts, and thus introduces at most $n$ fractional items.  
A similar approach is used by Goldberg et al.~\cite{ConsensusHalvingGoldberg}.

For $k > 2$, to ensure that each bundle contains at most $O(n)$ fractional items, it suffices that it contains only $O(n)$ pieces (intervals) after the $n(k - 1)$ cuts, since each bundle contains at most one fractional item per endpoint of the intervals it receives.

This was established in \cite{NecklaceConstraintsJojic}, under the assumptions that $ k $ is a power of a prime and that the measures are additive. However, as before, their proof uses only continuity (to invoke the Borsuk-Ulam type theorem), not additivity. Thus, their result extends to our setting. We therefore obtain a fractional allocation in which every bundle is equal under the $n$ multilinear extensions and contains $O(n)$ non-integral items.

Finally, we perform independent randomized rounding. Using McDiarmid’s inequality \cite{McDiarmid_1989}—a generalization of the Hoeffding's inequality— and union bound, we obtain an $O(\sqrt{n\log nk})$ discrepancy guarantee. The multilinear extension is essential: it expresses each bundle’s fractional value as the expectation of the random assignment, making McDiarmid’s bound directly applicable.

\paragraph{Consensus Halving:}
We now outline the proof of Corollary~\ref{cor:consensushalving}. Given a collection $\mathcal{V}$ of $n$ valuation functions, we define a transformed collection $\mathcal{V}'$ as follows. For each $v \in \mathcal{V}$, we define a corresponding function $v' \in \mathcal{V}'$ by  defining $v'(S)$ as the minimum number of items that must be exchanged between $S$ and $S^c$ to reverse the inequality between $v(S)$ and $v(S^c)$. The key observation is that each $v'$ has marginals in $\{-1, 0, 1\}$. This property allows us to directly apply Theorem~\ref{thm:mainintro} with two colors to the transformed family $\mathcal{V}'$, yielding Corollary~\ref{cor:consensushalving}.

\paragraph{Envy-freeness with subsidies:}

To derive Corollary~\ref{cor:subsidyintro} from Theorem~\ref{thm:mainintro}, we use a lemma by Halpern and Shah~\cite{HalpernShah}, which characterizes the allocations that can be made envy-free with payments. They prove that these are precisely the allocations that already maximize total welfare across all possible reassignments of the bundles. A convenient way to understand this is via the envy graph: a complete directed graph whose vertices represent the agents and whose edge weights indicate how much better or worse off an agent would be if they swapped bundles with another. If this graph contains no cycle with positive total weight, then suitable payments exist that eliminate all envy. Each agent’s payment can be set to the weight of the maximum-weight path starting from that agent.

Our strategy builds on this insight. We first break the items into bundles using a partition with low discrepancy. We then assign these bundles so that overall welfare is maximized. Because the resulting envy graph has no positive cycle, any path’s total weight is bounded by the single back-edge that would close a cycle. That weight of that back-edge is bounded by the discrepancy of the original partition. Consequently, the total payments we need are at most $(n-1)$ times that discrepancy.

\subsection{Concurrent work}

Independently and in parallel, Hollender, Manurangsi, Meka, and Suksompong~\cite{hollender2025} have studied the problem of extending discrepancy and envy-freeness for groups to non-additive valuations. Their work builds on the same high-level ideas as ours, namely the use of multilinear relaxations combined with necklace splitting to obtain a fair fractional allocation, followed by rounding via McDiarmid’s inequality. While the overall approaches are very similar, there are noteworthy differences. 

In the two-color case, they apply the two-color bound in a more direct and elegant way, which yields guarantees for envy-freeness up to $c$ goods under monotone valuations. Our approach instead leads to a weaker variant, in which items from the more valuable side are transferred rather than removed; this has the advantage of also applying in the non-monotone setting. 

In the multi-color case, we rely on a necklace splitting theorem of Jojić et al.~\cite{NecklaceConstraintsJojic}. This allows us to handle prime powers (rather than just primes) while ensuring that each agent receives $O(n)$ intervals, resulting in a fractional allocation with only $O(n)$ fractional items per bundle. Consequently, our analysis improves their bound from $O(\sqrt{nk \log(nk)})$ to $O(\sqrt{n \log(nk)})$. 

It is this improvement, used in the diagonal case $k=n$, that allows us to obtain improved subquadratic guarantees in the “fairness with subsidy” framework.

Both works were developed independently, and the close parallelism highlights the robustness of the underlying ideas.

\subsection{Further Related Work}

Necklace splitting and consensus halving have recently attracted considerable attention, particularly due to hardness results that establish a deep connection with the complexity class~PPA~\cite{furtherFilosPPA,furtherFilosPPA3,furtherDeligkas3,furtherDeligkas1,ModuloPFilos,Deligkas2,furtherFilosPPA4}.

Recent papers explore other directions related to envy-freeness with subsidies: addressing settings with weighted entitlements~\cite{aziz2024weightedenvyfreeallocationsubsidy,elmalem2025saidmoneywontsolve}; achieving envy-freeness with small subsidies and good Nash welfare under additive valuations~\cite{furtherTwoBirdsNarayan}; designing truthful mechanisms for envy-free allocations with small subsidies for matroid rank valuations~\cite{furtherGOKO}; analyzing the hardness of computing the minimum required subsidy~\cite{furtherCaragiannis}; and considering alternative fairness notions (proportionality or equitability) with subsidies~\cite{furtherWuZhou1,furtherWuZhou2,furtherAziz}.

\section{Preliminaries}

We introduce the definitions needed to state and prove our main theorem. Definitions specific to the applications will be provided in the corresponding sections.

Let $N = [n]$ be a set of $n$ agents, and let $M$ be a set of $m$ items. Each agent $i \in N$ has a valuation function $v_i : 2^M \to \mathbb{R}$. Let $\mathcal{V} = \{v_1, \dots, v_n\}$ denote the set of valuation functions.

A valuation function $v_i$ has \emph{marginals bounded by} $L$ if, for all $S \subseteq M$ and all $j \in M$, we have
\[
|v_i(S \cup \{j\}) - v_i(S)| \leq L.
\]

To move from discrete to continuous settings, we define the \emph{multilinear extension} of a valuation function~\cite{MultilinearExtension}. For a fractional vector $x \in [0,1]^M$, where $x_j$ represents the marginal probability of including item $j$, the multilinear extension $F_i : [0,1]^M \to \mathbb{R}$ of $v_i$ is defined as
\[
F_i(x) = \mathbb{E}_{S \sim x}[v_i(S)],
\]
where each item $j \in M$ is included in the random subset $S$ independently with probability $x_j$, i.e., $\Pr[j \in S] = x_j$.

A \emph{$k$-coloring} of $M$ is a function $\chi : M \to [k]$ that assigns each item a color from the set $\{1, \dots, k\}$, inducing the partition $M = \chi^{-1}(1) \cup \dots \cup \chi^{-1}(k)$.

A \emph{fractional $k$-coloring} generalizes this by allowing probabilistic assignments. It is a function $\chi : M \to \Delta_k$, where $\Delta_k$ is the $(k-1)$-dimensional simplex
\[
\Delta_k = \left\{(p_1, \dots, p_k) \in [0,1]^k \;\middle|\; \sum_{\ell=1}^k p_\ell = 1 \right\}.
\]
Each item $j \in M$ is thus assigned a probability distribution over the $k$ colors.

\subsection{Multi-color Discrepancy For Valuation Functions}

Let $k \geq 2$ be a number of colors, and let $\chi$ be a $k$-coloring of $M$. The \emph{discrepancy} of $\mathcal{V}$ with respect to $\chi$ is defined as
\[
\disc(\mathcal{V}, k, \chi) = \max_{\ell,\ell' \in [k]} \max_{1 \leq i \leq n} \left| v_i(\chi^{-1}(\ell)) - v_i(\chi^{-1}(\ell')) \right|.
\]
The $k$-color discrepancy of $\mathcal{V}$ is then defined as
\[
\disc(\mathcal{V}, k) = \min_{\chi : M \rightarrow [k]} \disc(\mathcal{V}, k, \chi).
\]

This definition naturally generalizes the classical notion of discrepancy. Given a matrix $ A \in \mathbb{R}^{n \times m} $, each row of $ A $ can be interpreted as an additive valuation function over the item set $ M $. The discrepancy of the resulting collection of valuation functions coincides—up to a factor of 2—with the matrix-based definition of multi-color discrepancy of Doerr and Srivastav \cite{MulticolorDoerr}.

\subsection{McDiarmid’s inequality}
We state here the uniform version of McDiarmid’s inequality, which will be used later to control the error of our randomized rounding schemes.

\begin{lemma}[McDiarmid's inequality\cite{McDiarmid_1989}]\label{lem:McDiarmid}
    
Let \( X = (X_1, \dots, X_t) \) be a vector of independent random variables, where \( X_i \) takes values in \( \mathcal{X}_i \) for \( i = 1, \dots, t \), and let  
$
f : \mathcal{X}_1 \times \dots \times \mathcal{X}_t \to \mathbb{R}
$
satisfy the uniform bounded–differences condition:
\[
\bigl| f(x) - f(x') \bigr| \le L
\quad \text{whenever } x, x' \text{ differ in at most one coordinate}.
\]

\noindent Then, for every $ a > 0 $,
\[
\Pr\!\left[\,\left|f(X) - \mathbb{E}f(X)\right| \ge a\,\right]
\;\le\;
2 \exp\!\left(-\frac{2a^2}{tL^2}\right).
\]
\end{lemma}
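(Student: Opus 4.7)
The plan is to prove McDiarmid's inequality by the classical martingale / bounded-differences route: introduce the Doob martingale associated with the random vector $X = (X_1, \ldots, X_t)$, show that its increments satisfy a uniform bounded-range condition, and then close with a Hoeffding-type moment generating function argument.

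First, I would define the Doob martingale
\[
Z_i = \mathbb{E}[f(X) \mid X_1, \ldots, X_i], \qquad i = 0, 1, \ldots, t,
\]
so that $Z_0 = \mathbb{E}[f(X)]$ and $Z_t = f(X)$. Writing $g_i(x_1, \ldots, x_i) = \mathbb{E}[f(x_1, \ldots, x_i, X_{i+1}, \ldots, X_t)]$, one has $Z_i = g_i(X_1, \ldots, X_i)$. The uniform bounded-differences hypothesis yields
\[
\bigl| g_i(x_1, \ldots, x_{i-1}, x_i) - g_i(x_1, \ldots, x_{i-1}, x_i') \bigr| \le L
\]
for any $x_i, x_i'$, simply by pushing $|f(\cdot) - f(\cdot)| \le L$ through the expectation over $X_{i+1}, \ldots, X_t$. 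Hence, conditional on $X_1, \ldots, X_{i-1}$, the increment $D_i := Z_i - Z_{i-1}$ is a mean-zero random variable whose range has width at most $L$. This is the essential (and essentially only) place where the hypothesis of the lemma is invoked.

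Given the conditional bounded range, I would invoke Hoeffding's lemma in the form $\mathbb{E}[e^{sY}] \le \exp(s^2 L^2 / 8)$ for every mean-zero $Y$ whose range has width at most $L$. Applying it conditionally to each $D_i$ and iterating the tower property over $i = 1, \ldots, t$ gives
\[
\mathbb{E}\!\left[ e^{s(Z_t - Z_0)} \right] \le \exp\!\left( \frac{s^2 t L^2}{8} \right).
\]
Markov's inequality then produces $\Pr[Z_t - Z_0 \ge a] \le \exp(-sa + s^2 t L^2 / 8)$, and optimizing with $s = 4a / (t L^2)$ yields the one-sided tail bound $\exp(-2 a^2 / (t L^2))$. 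Repeating the argument with $-f$ in place of $f$ and adding the two tail probabilities produces the factor of $2$ in the stated two-sided inequality.

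The only genuinely non-routine step is establishing the bounded-range property of the martingale increments $D_i$; once that is in hand, the rest is standard Azuma--Hoeffding machinery. The uniform statement is, in fact, slightly cleaner than the general weighted version, since no per-coordinate constants $c_i$ need to be tracked through the Hoeffding-lemma computation.
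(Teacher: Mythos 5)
The paper does not prove this lemma itself; it simply cites McDiarmid (1989) and uses the statement as a black box. Your proof is the standard martingale / Azuma--Hoeffding derivation and it is correct: the Doob martingale $Z_i = \mathbb{E}[f(X)\mid X_1,\dots,X_i]$ is set up properly, the key observation that the conditional increment $D_i$ is mean-zero with range of width at most $L$ (because $g_i(x_1,\dots,x_{i-1},\cdot)$ varies over an interval of length $\le L$, inherited from the bounded-differences hypothesis through the conditional expectation) is exactly the crux, and the Hoeffding-lemma bound $\mathbb{E}[e^{sD_i}\mid \mathcal{F}_{i-1}]\le e^{s^2L^2/8}$, the tower iteration, Markov, and the optimization $s = 4a/(tL^2)$ all check out and give the stated constant $2a^2/(tL^2)$. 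One small point worth making explicit if you were to write this out in full: the identity $Z_{i-1} = \mathbb{E}_{X_i}[g_i(X_1,\dots,X_{i-1},X_i)]$, which is what makes $D_i$ conditionally mean-zero, uses the independence of $X_i$ from $X_1,\dots,X_{i-1}$; independence is doing real work here, not just the filtration structure.
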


In words: if changing any single input can affect the value of $ f $ by at most $ L $, then $ f(X) $ concentrates around its expectation with sub-Gaussian tails governed by variance proxy $ tL^2/2 $.

\section{Non-additive Discrepancy}

In this section, we present and prove our main result, which is exactly Theorem~\ref{thm:mainintro}, restated here using our updated notation.

\begin{theorem}\label{thm:main}
Let $ k = p^\nu $ be a prime power, and let $ \mathcal{V} = \{v_1, \dots, v_n\} $ be a collection of valuation functions with marginals bounded by 1\footnote{If the marginals are bounded by a constant $ L $, the result holds with an additional multiplicative factor of $ L $, via rescaling.}. Then the multi-color discrepancy satisfies
\[
\disc(\mathcal{V}, k) = O\left( \sqrt{n \log nk} \right).
\]
\end{theorem}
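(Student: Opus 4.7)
The plan is to follow the three-step program sketched in the overview: (i) produce a fractional $k$-coloring in which every bundle has identical value under all $n$ multilinear extensions and each bundle contains only $O(n)$ fractional items; (ii) round each fractional item independently; (iii) apply McDiarmid's inequality together with a union bound.

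For step (i), I would model the items as a necklace, by chaining $m$ disjoint intervals $I_1,\dots,I_m \subset [0,1]$ on the real line, one per item. A finite set of cuts together with an assignment of the resulting sub-intervals to $k$ bundles yields a fractional coloring $\chi:M\to\Delta_k$, where item $j$ receives color $\ell$ with probability equal to the fraction of $I_j$ sent to bundle $\ell$; call $x^\ell\in[0,1]^M$ the corresponding fractional indicator. The natural "measures" to equalize are the continuous functionals $F_i(x^\ell)$, the multilinear extensions evaluated on these fractional bundles. Since $F_i$ is polynomial and $x^\ell$ depends piecewise-linearly on the cut coordinates, each $F_i(x^\ell)$ is continuous in the cut configuration, which is the only property the Borsuk--Ulam-type argument of Jojić, Panina, and Živaljević~\cite{NecklaceConstraintsJojic} actually uses. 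For $k=p^\nu$ a prime power, their theorem therefore yields at most $n(k-1)$ cuts and a distribution of the resulting intervals into $k$ bundles so that $F_1(x^1)=\cdots=F_1(x^k)$, \dots, $F_n(x^1)=\cdots=F_n(x^k)$, and such that every bundle receives $O(n)$ intervals. Since a bundle can contain a fractional copy of an item only at an interval endpoint adjacent to a cut, each bundle has only $O(n)$ fractional items.

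For step (ii), I would round item $j$ independently, assigning color $\ell$ with probability $\chi(j)_\ell$. By the definition of the multilinear extension, $\mathbb{E}[v_i(S_\ell)]=F_i(x^\ell)$, so the $k$ bundles all have the same expected valuation under each $v_i$. For step (iii), fix $i$ and $\ell$ and regard $v_i(S_\ell)$ as a function of the independent color assignments of the (at most $O(n)$) fractional items; flipping one such assignment changes $S_\ell$ by at most one item, so the marginal bound gives a bounded-differences constant $L=1$ over $t=O(n)$ variables. Lemma~\ref{lem:McDiarmid} then yields $\Pr[|v_i(S_\ell)-F_i(x^\ell)|\ge a]\le 2\exp(-\Omega(a^2/n))$; taking $a=c\sqrt{n\log(nk)}$ for a suitable constant $c$ and union-bounding over the $nk$ pairs $(i,\ell)$ shows that with positive probability every $v_i(S_\ell)$ lies within $O(\sqrt{n\log nk})$ of the common expectation $F_i(x^\ell)$, and the triangle inequality then bounds $|v_i(S_\ell)-v_i(S_{\ell'})|$ as claimed.

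The main obstacle I expect is step (i): one must convince oneself that the equivariant-topology argument behind the Jojić--Panina--Živaljević necklace theorem really does carry over when the objects being equalized are the multilinear extensions $F_i$ rather than additive measures. Concretely, the configuration space of cut-and-color schemes, the free $\mathbb{Z}/p$-action on it, and the test map into a representation sphere all need to be rewritten in terms of the continuous functionals $F_i(x^\ell)$, and the only non-trivial check is that the test map remains equivariant and continuous. Since $F_i$ depends smoothly (in fact polynomially) on $x^\ell$ and $x^\ell$ depends piecewise-linearly and equivariantly on the cuts, additivity is never invoked and the topological obstruction argument is unchanged. Once this transfer is justified, the rest of the proof is a routine concentration-plus-union-bound calculation.
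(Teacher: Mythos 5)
Your proposal matches the paper's proof essentially line for line: model items as disjoint intervals, apply the Jojić--Panina--Živaljević theorem to the continuous functionals given by the multilinear extensions (noting that their Borsuk--Ulam argument uses only continuity, not additivity), observe that each bundle then has $O(n)$ fractional items, and finish by independent rounding with McDiarmid's inequality and a union bound over the $nk$ agent--color pairs. The only cosmetic difference is that the paper packages the rounding-plus-concentration step as a separate lemma (Lemma~\ref{lem:rounding}); your reasoning and parameters ($t=O(n)$, $a=\Theta(\sqrt{n\log(nk)})$) agree with it.
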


To prove this theorem, we begin by showing that it suffices to construct a fractional coloring that induces a fractional partition in which all parts have equal values under the multilinear extensions of the valuations in $\mathcal{V}$, and in which each part contains only a few fractionally assigned items. This is formalized in the next lemma.

\begin{lemma}\label{lem:rounding}
Let $ t $ be a parameter. Suppose there exists a fractional coloring $ \chi : M \to \Delta_k $ such that for each $ \ell \in [k] $, the vector $ \chi_{\ell} := \left( \chi(j)_{\ell} \right)_{j \in M} \in [0,1]^M $, representing the fraction of each item assigned to color $ \ell $, satisfies:

\begin{enumerate}
    \item For every agent $ i \in [n] $ and all $ \ell, \ell' \in [k] $, we have
    \[
    F_i(\chi_{\ell}) = F_i(\chi_{\ell'}).
    \]
    \item For each $ \ell \in [k] $, the vector $ \chi_{\ell} $ has at most $ t $ non-integral coordinates.
\end{enumerate}
Then,
\[
\disc(\mathcal{V}, k) = O\left(\sqrt{t\log nk}\right).
\]
\end{lemma}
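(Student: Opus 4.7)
The plan is to perform independent randomized rounding of the fractional coloring and control the deviation via McDiarmid's inequality combined with a union bound. Concretely, for each item $j\in M$, independently draw a color $X_j\in[k]$ according to the distribution $\chi(j)\in\Delta_k$, and let $\chi_{\mathrm{rand}}$ denote the resulting random $k$-coloring. Since items are colored independently, the definition of the multilinear extension gives
\[
\mathbb{E}\bigl[v_i(\chi_{\mathrm{rand}}^{-1}(\ell))\bigr]\;=\;F_i(\chi_\ell),
\]
and by hypothesis~(1) this common expectation is the same across all $\ell\in[k]$ for every agent $i$. So the rounding already equalizes the bundle values in expectation; the task reduces to controlling fluctuations around the expectation.

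Next I would apply Lemma~\ref{lem:McDiarmid} to the function $f_{i,\ell}(X)=v_i(\{j:X_j=\ell\})$, for each fixed pair $(i,\ell)\in[n]\times[k]$. Two features of $f_{i,\ell}$ matter. First, $f_{i,\ell}$ depends on $X_j$ only when $\chi(j)_\ell\in(0,1)$, because items with $\chi(j)_\ell\in\{0,1\}$ belong deterministically to $\chi_{\mathrm{rand}}^{-1}(\ell)$ or to its complement; by hypothesis~(2) there are at most $t$ such coordinates. Second, changing a single $X_j$ either adds or removes $j$ from $\{j:X_j=\ell\}$, so by the marginal bound it alters $v_i$ by at most $1$, giving bounded differences with $L=1$. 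McDiarmid then yields
\[
\Pr\!\left[\bigl|v_i(\chi_{\mathrm{rand}}^{-1}(\ell))-F_i(\chi_\ell)\bigr|\ge a\right]\;\le\;2\exp\!\left(-\frac{2a^2}{t}\right).
\]

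Taking $a = C\sqrt{t\log(nk)}$ for a sufficiently large absolute constant $C$ and union-bounding over all $nk$ pairs $(i,\ell)$, the total failure probability is strictly less than $1$. Hence there exists a deterministic realization $\chi^\star$ of $\chi_{\mathrm{rand}}$ satisfying $|v_i((\chi^\star)^{-1}(\ell))-F_i(\chi_\ell)|\le a$ for every $i$ and $\ell$ simultaneously. Combined with hypothesis~(1) and the triangle inequality,
\[
\bigl|v_i((\chi^\star)^{-1}(\ell))-v_i((\chi^\star)^{-1}(\ell'))\bigr|\;\le\;2a\;=\;O\!\left(\sqrt{t\log(nk)}\right),
\]
which is the desired bound on $\disc(\mathcal{V},k)$.

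There is no deep obstacle here once the multilinear relaxation is in place; the only point that requires care is recognizing that the relevant McDiarmid variance proxy is $t$ rather than $m$, which follows from noting that integrally colored items contribute no randomness to $f_{i,\ell}$. The non-additive structure of $v_i$ plays no further role beyond the single-item marginal bound, which is precisely what makes the multilinear extension the right relaxation to work with.
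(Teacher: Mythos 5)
Your proof is correct and takes essentially the same approach as the paper: independent randomized rounding, then McDiarmid applied to the (at most $t$) fractionally colored coordinates with bounded-difference constant $1$, followed by a union bound over the $nk$ agent-color pairs and the triangle inequality. The only difference is cosmetic—you work directly with the color-valued random variables $X_j\in[k]$, whereas the paper introduces binary indicators and an explicit transformed valuation $v_i'$—but the underlying argument is identical.
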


\begin{proof}
We round the fractional allocation using a natural randomized scheme: for each item $ j \in [m] $, independently assign it to color $ \ell $ with probability $ \chi_\ell(j) $. Let $ S_\ell $ denote the resulting random bundle of color $ \ell $.

For each agent $ i $, define
\[
\mu_i := F_i(\chi_1) = \cdots = F_i(\chi_k) = \mathbb{E}\bigl[v_i(S_1)\bigr] = \cdots = \mathbb{E}\bigl[v_i(S_k)\bigr],
\]
where the second sequence of equalities follows from the definition of the multilinear extension.

\medskip

Fix an agent $ i $ and a color $ \ell $, and let
\[
J_\ell := \{ j \in [m] \mid \chi_\ell(j) \in (0,1) \}, \qquad |J_\ell| \le t.
\]
Define $ T_\ell := \{ j \in [m] \mid \chi_\ell(j) = 1 \} $, the set of items always assigned to color $ \ell $.

Enumerate $ J_\ell = \{ j_1, \dots, j_{t'} \} $ with $t' \leq t$, and for each $ s \in [t'] $, define the indicator variable
\[
X_s :=
\begin{cases}
1 & \text{if } j_s \in S_\ell, \\
0 & \text{otherwise}.
\end{cases}
\]

\smallskip

We define a transformed valuation function to apply McDiarmid’s inequality (Lemma~\ref{lem:McDiarmid}):
\[
v_i'\colon \{0,1\}^{t'} \to \mathbb{R}, \qquad
v_i'(x_1,\dots,x_{t'}) := v_i\left(T_\ell \cup \{ j_s \in J_\ell \mid x_s = 1 \} \right).
\]
Then the random vector $ X := (X_1,\dots,X_t) $ consists of independent bits, and
\[
v_i(S_\ell) = v_i'(X), \qquad
\mu_{i}  = \mathbb{E}[v_i'(X)].
\]

\smallskip

Flipping a single coordinate $ x_s $ changes the value of the bundle by at most $1$, since marginals of $v_i$ lie in $ [-1,1] $. Applying Lemma~\ref{lem:McDiarmid} yields:
\[
\Pr\left[\,|v_i(S_\ell) - \mu_{i}| \ge a\,\right]  = \Pr\left[\,|v'_i(X) - \mu_{i}| \ge a\,\right]\le 2\exp\left(-\frac{2a^2}{t'}\right) \leq 2\exp\left(-\frac{2a^2}{t}\right) .
\]

\medskip

We now choose $ a $ large enough to apply a union bound over all $ nk $ agent-color pairs. Let $a :=\sqrt{c\,t \log(nk)}\quad \text{for some constant } c$. Then,
\[
\Pr\left[\,|v_i(S_\ell) - \mu_{i}| \ge a\,\right] \le 2(nk)^{-2c}.
\]
Choosing $c$ such that $ 1 - 2(nk)^{-2c + 1}> 0  $, a union bound over all $ nk $ pairs $ (i,\ell) $ shows that with probability at least $ 1 - 2(nk)^{-2c + 1} $, we have for all $ i \in [n] $, $ \ell \in [k] $:
\[
|v_i(S_\ell) - \mu_{i}| \le O\left(\sqrt{t\log(nk)}\right).
\]

Combining this with the triangle inequality we get:

\[
|v_i(S_\ell) - v_i(S_{\ell'})| \leq |v_i(S_\ell) - \mu_i| + |v_i(S_{\ell'}) - \mu_i|  =  O\left(\sqrt{t\log(nk)}\right)\quad \text{for all } i \in [n],\; \ell, \ell' \in [k].
\]

This happens with positive probability, in particular there exists a coloring with discrepancy bounded by \( O\left(\sqrt{t \log(nk)}\right) \).

\end{proof}

Our next task is to exhibit a fractional allocation that satisfies Lemma~\ref{lem:rounding} with
$t = O(n)$.  To do so we invoke a theorem of Jojić et al.~\cite{NecklaceConstraintsJojic} which strengthen Alon's classical result on necklace splitting~\cite{NecklacesAlon}.

\begin{theorem}[\cite{NecklaceConstraintsJojic}]\label{thm:Jojic}
Let $k = p^{\nu}$ be a power of a prime.
For any $n$ continuous set functions
$\mu_1,\dots,\mu_n$ defined on disjoint unions of intervals in $[0,1]$—where inserting or deleting a degenerate (length‑zero) interval leaves the function value unchanged—there is a partition of $[0,1]$ into $k$ bundles using at most $n(k-1)$ cuts such that:
\begin{itemize}
    \item Each bundle is equal according to all $\mu_i$, and
    \item Each bundle contains fewer than $\bigl\lfloor n(k-1)/k \bigr\rfloor + 1$ intervals.
\end{itemize}
\end{theorem}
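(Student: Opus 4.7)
The plan is to adapt the equivariant-topological proof of Jojić et al.\ to the continuous (not necessarily additive) setting, following the configuration-space / test-map scheme that underlies all modern necklace-splitting results. First I would set up a configuration space $C$ whose points parameterize tuples of at most $n(k-1)$ ordered cut points in $[0,1]$ together with an assignment of the $n(k-1)+1$ resulting sub-intervals to the $k$ colors, restricted to those assignments in which each color receives fewer than $\lfloor n(k-1)/k \rfloor + 1$ intervals. The elementary abelian $p$-group $G = \mathbb{Z}_p^\nu$ acts on the color set of size $k = p^\nu$, and this lifts to a $G$-action on $C$ by permuting colors of the assignment.

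Next I would build a $G$-equivariant test map $\Phi : C \to W$, where $W$ is the direct sum of $n$ copies of the augmentation representation of $G$; the $(i,\ell)$-coordinate of $\Phi$ records $\mu_i(\text{bundle}_\ell) - \tfrac{1}{k}\mu_i([0,1])$. A preimage of $0$ under $\Phi$ is by construction a fair splitting satisfying the interval-count constraint. The map $\Phi$ is continuous because each $\mu_i$ is continuous and, by assumption, insensitive to degenerate intervals—this guarantees well-defined limits as two cut points merge or as a cut point crosses a previously coincident one, and it is the \emph{only} place the hypotheses on the $\mu_i$ enter. In particular, additivity is nowhere used, so the statement applies verbatim to general continuous set functions, and in our intended application to the multilinear extensions $F_i$.

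The core of the argument is an equivariant non-extension theorem of Borsuk--Ulam type, in the spirit of Dold or Volovikov: any $G$-equivariant map from a space of sufficiently high $G$-equivariant index into a fixed-point-free $G$-representation of bounded dimension must have a zero. The main technical obstacle is to verify that the \emph{constrained} configuration space $C$ retains the same equivariant index as its unconstrained Alon counterpart; in the classical necklace-splitting setup one has a transparent join-of-spheres model, but imposing the per-bundle interval cap trims the underlying labeled simplicial complex, and one must show that this trimming does not collapse the required equivariant connectivity. I expect this combinatorial-topological index computation to be the crux. Once it is in hand, forcing a zero of $\Phi$ yields the claimed splitting, and the passage from additive measures to arbitrary continuous set functions comes at no extra cost, since the whole argument is continuity-based.
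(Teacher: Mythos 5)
The paper does not actually \emph{prove} Theorem~\ref{thm:Jojic}; it cites Jojić et al.\ and then makes a single observation, namely that the published proof never uses additivity of the $\mu_i$ — only continuity is needed, to invoke their Borsuk--Ulam-type theorem. Your proposal instead attempts to \emph{reconstruct} that proof from scratch via the configuration-space/test-map scheme with a $\mathbb{Z}_p^\nu$ action. This is the right framework and the right group, and you correctly isolate the crux (showing the constrained configuration space still has high enough $G$-equivariant index to force a zero of the test map). But that crux is precisely the technical content of Jojić et al., and you leave it as "I expect this to be the crux" without doing it; the paper sidesteps this entirely by citing their result.

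There is also a concrete error in your test map for the non-additive case. You define the $(i,\ell)$ coordinate of $\Phi$ as $\mu_i(\text{bundle}_\ell) - \tfrac{1}{k}\mu_i([0,1])$. For this to land in the augmentation representation $W_k = \{x \in \mathbb{R}^k : \sum_\ell x_\ell = 0\}$, each row must sum to zero, which requires $\sum_\ell \mu_i(\text{bundle}_\ell) = \mu_i([0,1])$ — and that is exactly additivity, the hypothesis you are trying to drop. For general continuous set functions you must instead subtract the running mean, i.e.\ take $\Phi_{i,\ell} = \mu_i(\text{bundle}_\ell) - \tfrac{1}{k}\sum_{\ell'} \mu_i(\text{bundle}_{\ell'})$, which is the orthogonal projection of $(\mu_i(\text{bundle}_\ell))_\ell$ onto $W_k$. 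A zero of this corrected map says all $k$ bundle values are equal, which is the desired conclusion; a zero of your original map would assert the stronger (and generally unattainable) condition that each equals $\tfrac{1}{k}\mu_i([0,1])$. This fix is exactly the point where "additivity not used" must be checked carefully, so it is worth getting right rather than eliding.
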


The published statement in~\cite{NecklaceConstraintsJojic} assumes that the functions $\mu_i$ are additive probability measures. However, additivity is never used in the proof—the argument relies solely on continuity to invoke a Borsuk--Ulam-type theorem. Thus, Theorem~\ref{thm:Jojic} holds in the broader generality stated above.

Similar observations have been made in related settings: Simmons and Su~\cite{ConsensusHalvingSimmons} for the case $k = 2$, and Filos-Ratsikas et al.~\cite{ModuloPFilos} for prime $k$, both noted that Alon's result using at most $n(k-1)$ cuts does not require additivity. Theorem~\ref{thm:Jojic} goes slightly further, as it applies to prime powers $k$ and includes the additional constraint that each bundle contains only a small number of intervals. With Theorem~\ref{thm:Jojic} in hand we can prove Theorem \ref{thm:main}.

\begin{proof}

Model each (indivisible) item as a tiny disjoint interval in $[0,1]$.
For any measurable subset $S\subseteq[0,1]$ that is a disjoint union of intervals, define
$\mu_i(S) = F_i(\mathbf{x}(S))$,
where $F_i$ is the multilinear extension of player $i$’s valuation and $\mathbf{x}(S)$ is the corresponding incidence vector. 
The incidence vector $\mathbf{x}(S) \in [0,1]^m$ encodes the fractional inclusion of items in $S$: the $j$th coordinate $\mathbf{x}(S)_j$ equals the measure of the intersection between item~$j$’s interval and $S$.
Since each $F_i$ is continuous, the hypotheses of Theorem~\ref{thm:Jojic} are satisfied.

Applying the theorem yields a partition of $[0,1]$ into $k$ bundles using $n(k-1)$ cuts, with every bundle consisting of at most
$\lfloor n(k-1)/k \rfloor + 1 \le n$ intervals.
Along each cut, at most one item is sliced, so each bundle contains at most two fractional items per interval in the bundle, i.e. at most $2n$ fractional items in total.
Moreover, by construction all bundles are exactly equal according to every $F_i$.
Consequently we obtain a fractional allocation that meets Lemma~\ref{lem:rounding} with
$t = 2n$, completing the proof of Theorem~\ref{thm:main}.
\end{proof}
Goldberg et al.~\cite{ConsensusHalvingGoldberg} used an analogous idea for the case $k=2$, using Simmons and Su~\cite{ConsensusHalvingSimmons} result to obtain a fractional allocation with at most $n$ fractional items per bundle.

Extending Theorem~\ref{thm:Jojic} beyond prime powers would directly generalize Theorem~\ref{thm:main}. However it remains open even without the extra requirement that each bundle contain at most $n$ pieces. For composite $k$, the standard proof for necklace splitting uses a recursive cutting procedure that crucially depends on additivity; dropping that assumption breaks the recursion and no alternative argument is known.

\section{Fair Consensus Halving}\label{sec:consensus-halving}

In this section we extend the framework of Manurangsi and Suksompong~\cite{ImprovedBoundsManurangsi} to the non‑additive setting in the consensus halving case ($k=2$).  
Our goal is to split the set $M$ of indivisible items into two blocks $S$ and $S^{c}$ in such a way that every agent regards the division as approximately fair.  
The quality of a bipartition is measured by the minimum number of items that must cross the cut in order to make the ``poorer'' side richer for that agent.  
We call this quantity the \emph{transfer‑imbalance}.  Unlike discrepancy it can be kept small without assuming bounded marginals.
For monotone valuations, Manurangsi and Suksompong measured fairness by how many items have to be \emph{removed} from the richer bundle to break the tie (``consensus $1/k$ up to $c$ items'' in their paper). Their definition is closely related to the notion of \emph{envy-freeness up to $c$ items (EF$c$)}, extensively studied in the fair division literature.

Our transfer notion is not as standard, but it appears as well in fair division, for example in the literature on weighted entitlement (See WEF(1,1) in the review by Suksompong \cite{SUKSOMPONGReviewWeighted}).
Replacing removal by two‑sided transfers is slightly weaker, but it makes the definition meaningful even for non‑monotone valuations.   

\paragraph{Transfer-Imbalance:}For a valuation $v$ and two disjoint sets $S,T\subseteq M$ the transfer imbalance between $A$ and $B$ is defined as:
\[
T_v(A, B) := 
\begin{cases}
\min\left\{\,|S| + |S'| \;\middle|\; 
\begin{array}{l}
S \subseteq A,\; S' \subseteq B, \\[2pt]
v\big((A \setminus S) \cup S'\big)
\leq
v\big((B \setminus S') \cup S\big)
\end{array}
\right\}
& \text{if } v(A) \ge v(B), \\[12pt]
T_v(B, A)
& \text{if } v(B) > v(A).
\end{cases}
\]

Intuitively, $T_{v}(A,B)$ is the smallest number of items that must be transferred from one side to the other to reverse the inequality between $v(A)$ and $v(B)$.

\paragraph{Transfer-Discrepancy:} Let $k \geq 2$ be a number of colors, and let $\chi$ be a $k$-coloring of $M$. The \emph{transfer-discrepancy} of $\mathcal{V}$ with respect to $\chi$ is defined as
\[
\textstyle\disc^T(\mathcal{V}, k, \chi) = \max_{\ell,\ell' \in [k]} \max_{1 \leq i \leq n} T_{v_i}(\chi^{-1}(\ell),\chi^{-1}(\ell')).
\]
The $k$-color transfer-discrepancy of $\mathcal{V}$ is then defined as
\[
\textstyle\disc^T(\mathcal{V}, k) = \min_{\chi : M \rightarrow [k]} \disc(\mathcal{V}, k, \chi).
\]

Whenever all marginal values are at most $1$, removal and transfer differ by at most a factor of~$2$, so existing lower bounds immediately carry over.

\begin{restatable}{proposition}{proplower}\label{prop:T-lb}
For every $k\ge 2$, there exists a collection of $n$ (additive) valuation functions with $k$-color transfer-discrepancy at least $\Omega(\sqrt n)$
\end{restatable}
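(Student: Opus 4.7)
The plan is to derive the lower bound on transfer-discrepancy directly from the known $\Omega(\sqrt{n})$ lower bound on multi-color classical discrepancy, using the fact that for additive valuations with unit-bounded marginals each transfer can close the value gap by a controlled amount. Concretely, I would invoke the recent result of Manurangsi and Meka~\cite{manurangsi2025tightlowerboundmulticolor}, which exhibits, for every $k\ge 2$, a family $\mathcal{V}$ of $n$ additive valuations with marginals in $[-1,1]$ such that $\disc(\mathcal{V},k)=\Omega(\sqrt{n})$ independently of $k$.

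The bridge from discrepancy to transfer-discrepancy is the following elementary observation: if $v$ is additive with marginals in $[-1,1]$ and $A,B\subseteq M$ are disjoint with $v(A)\ge v(B)$, then moving a single item either from $A$ to $B$ or from $B$ to $A$ changes the quantity $v(A)-v(B)$ by exactly $\pm 2 v(\{j\})\in[-2,2]$. Hence after $T$ transfers the signed change is at most $2T$ in absolute value, so in order to reverse the inequality (i.e.\ make $v((A\setminus S)\cup S')\le v((B\setminus S')\cup S)$) one needs
\[
T \;\ge\; \tfrac{1}{2}\bigl(v(A)-v(B)\bigr).
\]
In other words, $T_{v}(A,B)\ge \tfrac12|v(A)-v(B)|$ whenever $v$ is additive with unit-bounded marginals.

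Applying this inequality coordinate-wise to each agent $v_i$ in the Manurangsi--Meka family and to each pair of color classes of an arbitrary coloring $\chi:M\to[k]$ yields
\[
\disc^{T}(\mathcal{V},k,\chi) \;\ge\; \tfrac12\,\disc(\mathcal{V},k,\chi).
\]
Taking the minimum over all $k$-colorings $\chi$ preserves the inequality, so $\disc^{T}(\mathcal{V},k)\ge \tfrac12 \disc(\mathcal{V},k)=\Omega(\sqrt{n})$, which is the claimed bound.

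There is no real obstacle: the only subtlety worth spelling out is that the definition of $T_v(A,B)$ allows simultaneous transfers in both directions ($S\subseteq A$ and $S'\subseteq B$), so one must check that the $\pm 2$ per-item bound still applies to each of the $|S|+|S'|$ moves. Since the items in $S$ and $S'$ are disjoint and additivity lets us process the moves sequentially, the bound carries over verbatim, and the proof is complete.
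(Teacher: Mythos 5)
Your proof is correct and follows essentially the same route as the paper's: invoke the Manurangsi--Meka $\Omega(\sqrt{n})$ lower bound for additive multi-color discrepancy, then observe that for unit-bounded additive valuations each transferred item shifts $v(A)-v(B)$ by at most $2$, giving $T_v(A,B)\ge\tfrac12|v(A)-v(B)|$. You are in fact slightly more careful than the paper about the per-item change being $\pm 2v(\{j\})$ (the paper loosely says "at most $1$"), but the conclusion and overall argument are identical.
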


\begin{proof}

The lower bound for (additive) multi-color discrepancy have been recently improved to $\Omega(\sqrt n)$ by Manurangsi and Meka ~\cite{manurangsi2025tightlowerboundmulticolor}, and now match the additive upper bound. 
Because the valuations in their construction have marginals bounded by $1$, moving a single item changes the additive imbalance by at most~$1$.  Hence for every two color classes $A,B$ and every agent $v$,
\[ 
  T_{v}(A,B)\;\ge\;\frac{\lvert v(A)-v(B)\rvert}{2},
\]
so the same family witnesses $\disc^{T}(\mathcal{V},k)=\Omega(\sqrt{n})$.
\end{proof}

\subsection{Upper bound for consensus halving}

We now show that our Theorem~\ref{thm:main} implies a $O(\sqrt{n \log n})$ upper bound in the two color case.

\conhal*

The proof relies on the following Lipschitz property of the transfer‑imbalance.

\begin{lemma}\label{lem:Lipschitz}
For every valuation $v$ and every $S\subseteq M,\;x\in M$,
\[
  \bigl|
    T_{v}(S,S^{c}) \;-\; T_{v}(S\triangle\{x\},\,(S\triangle\{x\})^{c})
  \bigr|
  \;\le\; 1.
\]
\end{lemma}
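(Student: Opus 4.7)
The plan is to reformulate the transfer-imbalance as a symmetric-difference distance and then do a short case analysis. Concretely, if $(T,T')$ is any candidate transfer pair in the definition of $T_v(A,B)$ with $T\subseteq A$, $T'\subseteq B$, then the post-transfer $A$-side $A^*:=(A\setminus T)\cup T'$ satisfies $A^*\triangle A = T\sqcup T'$, so $|T|+|T'|=|A^*\triangle A|$. Under the assumption $v(A)\ge v(B)$, the definition rewrites as
\[
T_v(A,B)=\min\bigl\{|A^*\triangle A|\;:\;A^*\subseteq M,\; v(A^*)\le v(M\setminus A^*)\bigr\},
\]
and the case $v(B)>v(A)$ is obtained by swapping the inequality inside the constraint, using the symmetric definition $T_v(A,B)=T_v(B,A)$. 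Writing $U(S):=v(S)-v(S^c)$, the transfer-imbalance is thus the symmetric-difference distance from $S$ to the nearest set whose $U$-value has the opposite (weak) sign.

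Next I would case split on whether $U(S)$ and $U(S\triangle\{x\})$ have the same sign. In the same-sign case (say both $\ge 0$, the other being symmetric) both $T_v(S,S^c)$ and $T_v(S\triangle\{x\},(S\triangle\{x\})^c)$ are minima over the same feasible family $\{S^*:U(S^*)\le 0\}$. Taking an optimal $S^*$ for $S$ and using the triangle inequality for symmetric difference gives
\[
|S^*\triangle (S\triangle\{x\})|\le |S^*\triangle S|+|S\triangle (S\triangle\{x\})|=T_v(S,S^c)+1,
\]
so $T_v(S\triangle\{x\},(S\triangle\{x\})^c)\le T_v(S,S^c)+1$; the reverse inequality is identical with the roles swapped.

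In the opposite-sign case, $S$ itself is a feasible witness for the transfer-imbalance of $S\triangle\{x\}$, since $U(S)$ sits on the required side of zero and $|S\triangle(S\triangle\{x\})|=1$, yielding $T_v(S\triangle\{x\},(S\triangle\{x\})^c)\le 1$. By the symmetric argument, $S\triangle\{x\}$ is a witness for $T_v(S,S^c)\le 1$, so both quantities lie in $\{0,1\}$ and their difference is at most $1$.

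The only step that requires any care is the initial reformulation: one must check that the paper's two-set description $(T,T')$ is in bijection with ``pick a single close-by set $A^*$ and read off the induced transfer,'' and that the inequality direction is preserved under the $v(A)<v(B)$ branch of the definition. Once that bookkeeping is in place, the remainder is a routine triangle-inequality argument on the symmetric-difference metric, with no reliance on additivity, monotonicity, or bounded marginals.
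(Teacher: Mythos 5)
Your proof is correct, and it takes a genuinely cleaner route than the paper's. You reformulate $T_v(S,S^c)$ as the Hamming distance (symmetric-difference distance) from $S$ to the nearest set whose value of $U(S^*)=v(S^*)-v(M\setminus S^*)$ lies on the opposite (weak) side of zero, and then the Lipschitz bound falls out of a triangle inequality in the same-sign case plus a ``both are at most $1$'' observation in the opposite-sign case. The paper's argument instead takes an optimal transfer witness $R$ for $S$, removes $x$ from $R$ if present, and declares the result a witness for $S\triangle\{x\}$; as written it treats the witness as a single one-sided set $R\subseteq S$ (whereas the definition allows two-sided transfers), gives only the $\le$ direction explicitly, and does not explicitly handle the case where flipping $x$ changes which side is richer (so that the feasibility constraint for the two minimization problems points the opposite way). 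Your reformulation makes all three of these issues evaporate at once: the feasible set depends only on the sign of $U$, the witness is just a set $S^*$, and the triangle inequality is symmetric in $S$ and $S\triangle\{x\}$. In short, same elementary spirit, but your version is the tighter write-up and the one I would keep.
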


\begin{proof}
Assume w.l.o.g.\ that $x\in S$ (the other case is symmetric).  
Let $t=T_{v}(S,S^{c})$ witnessed by a transfer set $R\subseteq S$ of size $\lvert R\rvert=t$.  
If $x\notin R$ then $R$ also witnesses an imbalance between $S\triangle\{x\}$ and its complement, so $T_{v}(S\triangle\{x\},(S\triangle\{x\})^{c})\le t$.  
If $x\in R$ then $R\setminus\{x\}$ witnesses an imbalance after the flip, of size at most $t-1$.  
In either case the new transfer‑imbalance differs from $t$ by at most~$1$.
\end{proof}

Lemma \ref{lem:Lipschitz} shows that $S\mapsto T_{v}(S,S^{c})$ is 1‑Lipschitz with respect to Hamming distance. We are now ready to prove Corollary~\ref{cor:consensushalving}

\begin{proof}

Fix an agent $i \in N$.  Define
\[
  v'_{i}(S)
  \;=\;
  \begin{cases}
    +T_{v_{i}}(S,S^{c}) & \text{if } v_{i}(S)\;\ge\;v_{i}(S^{c}),\\[4pt]
    -T_{v_{i}}(S,S^{c}) & \text{otherwise.}
  \end{cases}
\]

It is anti-symmetric $v'_{i}(S)=-v'_{i}(S^{c})$ and by Lemma~\ref{lem:Lipschitz},$\bigl|v'_{i}(S\triangle\{x\})-v'_{i}(S)\bigr|\le 1$ for every item $x$. Thus, every $v'_{i}$ satisfies the assumptions of Theorem~\ref{thm:main} with \emph{marginals bounded by~$1$} and there exists a partition $M = S \cup S^c$ such that \[\max_{i\in[n]} \bigl|v'_{i}(S) - v'_i(S^c)\bigr| =\max_{i\in[n]} \bigl|2v'_{i}(S) \bigr|  = O(\sqrt{n \log n}).\]
This concludes the proof of Corollary~\ref{cor:consensushalving}. 

\end{proof}

\section{Envy-freeness With Subsidies}
We begin this section by outlining the setting introduced by Halpern and Shah \cite{HalpernShah}.

An \emph{allocation} is an ordered partition $\bm{A} = (A_1, \dots, A_n)$ of the item set $M$. Let $\bm{p} = (p_1, \dots, p_n) \in \mathbb{R}^n_{\geq 0}$ denote a \emph{payment vector}. The pair $(\bm{A}, \bm{p})$ is said to be \emph{envy-free} if, for every pair of agents $i, j \in N$, we have $v_i(A_i) + p_i \geq v_i(A_j) + p_j$. The goal is to find an envy-free allocation that minimizes the total payment $\sum_{i=1}^n p_i$.

An allocation $\bm{A}$ is called \emph{envy-freeable} if there exists a payment vector $\bm{p}$ such that $(\bm{A}, \bm{p})$ is envy-free.
The \emph{envy graph} $G_{\bm{A}}$ associated with an allocation $\bm{A} = (A_1, \dots, A_n)$ is the complete directed graph on the set of agents $N$, where the weight of arc $(i, j)$ is given by the envy agent $i$ feels toward agent $j$: $w_{\bm{A}}(i, j) = v_i(A_j) - v_i(A_i)$.
The weight of a path (or cycle) in $G_{\bm{A}}$ is defined as the sum of the weights of its arcs.

The following lemma—central to all subsequent work in this setting—characterizes envy-freeable allocations and determines the minimal payment vector required.

\begin{lemma}[Halpern and Shah \cite{HalpernShah}]\label{lem:HalpernShah}
The following statements are equivalent:
\begin{enumerate}
    \item The allocation $\bm{A} = (A_1, \dots, A_n)$ is envy-freeable.
    \item $\bm{A}$ maximizes welfare over all reassigned bundles: for every permutation $\sigma : N \rightarrow N$,
    \[
    \sum_{i=1}^n v_i(A_i) \geq \sum_{i=1}^n v_i(A_{\sigma(i)}).
    \]
    \item The envy graph $G_{\bm{A}}$ contains no positive-weight cycle.
\end{enumerate}

Moreover, if these conditions hold, the \emph{minimal payment vector} $\bm{p}$ that makes $(\bm{A}, \bm{p})$ envy-free is defined by setting $p_i$ to be the maximum weight of any path starting at vertex $i$ in $G_{\bm{A}}$. The absence of positive cycles ensures that this is well-defined.
\end{lemma}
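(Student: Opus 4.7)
The plan is to prove the cyclic chain of implications $(1)\Rightarrow(3)\Rightarrow(2)\Rightarrow(3)\Rightarrow(1)$, then deduce the characterization of the minimal payment vector. Actually, the cleanest organization is $(2)\Leftrightarrow(3)$ by a standard assignment-problem argument, and then $(1)\Leftrightarrow(3)$ using the envy graph.

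For $(2)\Leftrightarrow(3)$, I would exploit the identity
\[
  \sum_{i=1}^{n} v_i(A_{\sigma(i)}) - \sum_{i=1}^{n} v_i(A_i) \;=\; \sum_{i=1}^{n} w_{\bm{A}}(i,\sigma(i)),
\]
which expresses the welfare change under permutation $\sigma$ as a sum of arc weights in $G_{\bm{A}}$. Decomposing $\sigma$ into disjoint cycles, this sum becomes a sum of cycle weights in $G_{\bm{A}}$. Hence a welfare-improving permutation exists iff some directed cycle in $G_{\bm{A}}$ has strictly positive weight, giving the equivalence.

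For $(1)\Rightarrow(3)$, I would argue by contradiction. If a cycle $i_0\to i_1\to\cdots\to i_{r-1}\to i_0$ has positive weight, envy-freeness gives $p_{i_t}-p_{i_{t+1}} \ge w_{\bm{A}}(i_t,i_{t+1})$ for each arc. Summing around the cycle, the left-hand side telescopes to $0$ while the right-hand side is positive, a contradiction. For $(3)\Rightarrow(1)$, the key construction is to set
\[
  p_i \;=\; \max\bigl\{\, w(P) \;:\; P\text{ is a directed path in }G_{\bm{A}}\text{ starting at }i \,\bigr\},
\]
where $w(P)$ denotes the path's total weight and we include the empty path of weight $0$. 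The absence of positive cycles guarantees this maximum is finite, and the empty-path inclusion gives $p_i\ge 0$. For any arc $(i,j)$, prepending $(i,j)$ to a max-weight path starting at $j$ yields a path from $i$ of weight $w_{\bm{A}}(i,j)+p_j$, hence $p_i \ge w_{\bm{A}}(i,j) + p_j$, which rearranges to $v_i(A_i)+p_i \ge v_i(A_j)+p_j$: the pair $(\bm{A},\bm{p})$ is envy-free.

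For the minimality claim, I would show that any envy-free payment $\bm{p}'$ satisfies $p'_i \ge p_i$. The envy-freeness inequality rewrites as $p'_i - p'_j \ge w_{\bm{A}}(i,j)$ for every arc, so along any directed path $i = v_0,v_1,\ldots,v_r$ in $G_{\bm{A}}$, summing and telescoping yields $p'_i - p'_{v_r} \ge w(P)$, and since $p'_{v_r}\ge 0$ we get $p'_i \ge w(P)$. Taking the maximum over all paths starting at $i$ gives $p'_i \ge p_i$, so $\bm{p}$ is coordinatewise minimal. I do not expect any serious obstacle: the main subtlety is just being careful that the empty path is allowed so that $p_i \ge 0$ and that the max is attained (which follows from having no positive cycles, so longest simple paths suffice).
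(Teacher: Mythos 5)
The paper states this lemma without proof, attributing it to Halpern and Shah~\cite{HalpernShah}, so there is no in-paper argument to compare against. Your reconstruction is correct and is the standard proof: the $(2)\Leftrightarrow(3)$ equivalence via cycle decomposition of permutations, the $(1)\Rightarrow(3)$ telescoping argument, the longest-path potential construction for $(3)\Rightarrow(1)$, and the coordinatewise minimality are all as in the original.

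One small point worth tightening: when you prepend arc $(i,j)$ to a maximum-weight path $P^*$ starting at $j$, the result need not be a \emph{path}---if $P^*$ happens to pass through $i$, the concatenation is only a walk. The conclusion $p_i \ge w_{\bm{A}}(i,j) + p_j$ still holds, because a walk in a graph with no positive cycles can be reduced to a simple path of at least the same weight by excising non-positive cycles (or, equivalently, one can define $p_i$ as the supremum over all walks from $i$ and observe that this supremum is finite and equals the maximum over simple paths precisely because no positive cycle exists). You gesture at this in your closing remark, but making it explicit would close the only gap in the write-up.
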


Our next lemma provides an upper bound on the total payment required, expressed in terms of the discrepancy of the valuations.

\begin{lemma}\label{lem:subsidydiscrepancy}
There exists an allocation $\bm{A} = (A_1, \dots, A_n)$ and a payment vector $\bm{p} = (p_1, \dots, p_n)$ such that the pair $(\bm{A}, \bm{p})$ is envy-free and, for all $i \in N$, it holds that $p_i \leq \disc(\mathcal{V}, n)$. In particular, the total payment satisfies:
\[
\sum_{i=1}^n p_i \leq (n-1)\disc(\mathcal{V}, n).
\]
\end{lemma}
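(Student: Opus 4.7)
The plan is to combine a partition that realises the discrepancy bound with the envy-graph machinery of Lemma~\ref{lem:HalpernShah}. Set $d := \disc(\mathcal{V}, n)$ and, by definition of $\disc$, pick a partition $M = S_1 \cup \dots \cup S_n$ achieving it, so that $|v_i(S_\ell) - v_i(S_{\ell'})| \leq d$ for every agent $i$ and every pair $\ell, \ell'$. I would then assign these bundles to agents via a welfare-maximising permutation $\sigma$, setting $A_i := S_{\sigma(i)}$. By the implication $(2)\Rightarrow(1)$ of Lemma~\ref{lem:HalpernShah} the resulting $\bm{A}$ is envy-freeable, and by $(3)$ the envy graph $G_{\bm A}$ contains no positive-weight cycle; the canonical Halpern--Shah payment $p_i$ is then the maximum weight of a simple path in $G_{\bm A}$ starting at agent $i$.

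The crux is to bound $p_i \leq d$ via a back-edge argument. For any simple path $i = i_0 \to i_1 \to \dots \to i_r$ of weight $W$ in $G_{\bm A}$, appending the edge $(i_r, i_0)$ closes it into a cycle whose weight $W + w_{\bm A}(i_r, i_0)$ must be non-positive. Rearranging gives
\[
W \;\leq\; -w_{\bm A}(i_r, i_0) \;=\; v_{i_r}(A_{i_r}) - v_{i_r}(A_{i_0}) \;=\; v_{i_r}(S_{\sigma(i_r)}) - v_{i_r}(S_{\sigma(i_0)}),
\]
which is at most $d$ by the discrepancy bound (the absolute-value form takes care of either sign). Since the empty path has weight $0$, this yields $0 \leq p_i \leq d$ uniformly in $i$.

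For the stronger sum bound I would shift the payment vector: set $\bm p' := \bm p - (\min_j p_j)\,\mathbf{1}$. Each envy constraint depends only on the differences $p_i - p_j$, so $(\bm A, \bm p')$ remains envy-free; the new payments are non-negative, still satisfy $p'_i \leq d$, and at least one coordinate equals $0$. Hence at most $n-1$ of the $p'_i$ are strictly positive, each at most $d$, giving $\sum_i p'_i \leq (n-1)d$ as required. The only non-routine step is the back-edge bound; the shift and the invocations of Lemma~\ref{lem:HalpernShah} are immediate.
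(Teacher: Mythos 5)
Your proof is correct and follows essentially the same route as the paper: assign the discrepancy-minimising bundles via a welfare-maximising permutation, invoke Halpern--Shah, and bound each $p_i$ via the back-edge $(i_r,i_0)$ closing a non-positive-weight cycle. The only cosmetic difference is that you shift the payment vector by $\min_j p_j$ to get a zero coordinate, whereas the paper simply observes that the canonical Halpern--Shah payments already satisfy $\min_j p_j = 0$; these are equivalent.
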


\begin{proof}
Let $ \chi: M \rightarrow [n] $ be an $ n $-coloring of $ M $ that minimizes $ \disc(\mathcal{V}, n, \chi) $. This coloring induces a collection of bundles, where for each $ i \in [n] $, we define $ A_i = \chi^{-1}(i) $. We then consider all possible reassignments of the bundles to the agents and reorder the indices so that the resulting allocation $ \bm{A} = (A_1, \dots, A_n) $ maximizes the total welfare among all such permutations. By Lemma~\ref{lem:HalpernShah}, the allocation $ \bm{A} $ is envy-freeable. If we define $ p_i $ as the maximum weight of any path starting at vertex $ i $ in the graph $ G_{\bm{A}} $, then the payment vector $ \bm{p} = (p_1, \dots, p_n) $ ensures that the pair $ (\bm{A}, \bm{p}) $ is envy-free. It remains to bound each $ p_i $.

Let $i$ be an agent, and let $j$ be the last vertex in a maximum-weight path starting at vertex $i$. By Lemma~\ref{lem:HalpernShah}, the graph $G_{\bm{A}}$ contains no positive-weight cycles. Therefore, $p_i + w_{\bm{A}}(j, i) \leq 0$, which implies that $p_i \leq |v_j(A_i) - v_j(A_j)| \leq \disc(\mathcal{V}, n)$. Moreover at least one of the $p_i$ must be zero. Hence, the total payment satisfies $\sum_{i=1}^n p_i \leq (n - 1) \cdot \disc(\mathcal{V}, n)$.

\end{proof}

Corollary~\ref{cor:subsidyintro} directly follows from Lemma~\ref{lem:subsidydiscrepancy} and Theorem~\ref{thm:main}.

\subsidy*

\section{Conclusion}

We introduced a natural generalization of combinatorial discrepancy to non-additive functions and proved an $O(\sqrt{n \log(nk)})$ upper bound for the $k$-color case when $k$ is a prime power. Our work opens several directions for future research.

A natural question is whether our bound can be extended to all values of $k$, beyond prime powers. Additionally, there remains a small $\sqrt{\log(nk)}$ gap between the upper and lower bounds. Can this gap be closed? Even in the case $k = 2$, it is unclear whether the extra $\sqrt{\log n}$ factor is inherent, or if it is possible to ``beat the union bound'', as in the additive case.

Finally, exploring further implications of non-additive discrepancy is a promising direction. While we focused on applications to fair division, we believe that the generality of our bound suggests potential applications in other areas as well.

\printbibliography

\end{document}